\documentclass[conference]{IEEEtran}
\IEEEoverridecommandlockouts
\usepackage{amssymb,amsfonts,amsmath,amsthm}
\usepackage[dvips]{graphicx}
\usepackage{overpic}
\usepackage{subfigure}
\usepackage{color}
\usepackage{cases}
\usepackage{enumerate}
\usepackage{framed}
\definecolor{shadecolor}{rgb}{0.92,0.92,0.92}
\usepackage{listings}
\usepackage{graphicx}
\usepackage{float}
\usepackage{mathrsfs}
\usepackage{pgf}
\usepackage{tikz}
\usetikzlibrary{arrows, automata, positioning, calc, shapes}
\usepackage{comment}
\usepackage[numbers,sort&compress]{natbib}
\usepackage{caption}
\usepackage{hyperref}
\usepackage{bm} 
\usepackage[ruled,linesnumbered]{algorithm2e}
\usepackage{multirow}

\newtheorem{theorem}{Theorem}

\newtheorem{example}{Example}
\newtheorem{definition}{Definition}
\newtheorem{lemma}{Lemma}
\newtheorem{remark}{Remark}
\newtheorem{construction}{Construction}
\newtheorem{problem}{Problem}

\newtheorem{corollary}{Corollary}
\newtheorem{claim}{Claim}

\newcommand{\bd}{\bm{d}}

\newcommand{\br}{\bm{r}}

\newcommand{\bv}{\bm{v}}

\newcommand{\bx}{\bm{x}}

\newcommand{\cC}{\mathcal{C}}
\newcommand{\cD}{\mathcal{D}}

\newcommand{\cF}{\mathcal{F}}

\newcommand{\cS}{\mathcal{S}}

\newcommand{\cX}{\mathcal{X}}

\newcommand{\bbE}{\mathbb{E}}

\usepackage{comment}
\allowdisplaybreaks
\title{Sequential Random Sampling PIR with Multiple Colluding Servers in DNA-Based Data Storage
\thanks{The research was Funded by the European Union (ERC, DNAStorage, 101045114 and EIC, DiDAX 101115134). Views and opinions expressed are however those of the authors only and do not necessarily reflect those of the European Union or the European Research Council Executive Agency. Neither the European Union nor the granting authority can be held responsible for them. The research of C. Wang was supported in part at the Technion by a fellowship from the Lady Davis Foundation.}} 
\author{\textbf{Chen Wang}, \textbf{Natalia Silberstein}, and \textbf{Eitan Yaakobi}\\
\IEEEauthorblockA{Department of Computer Science, Technion---Israel Institute of Technology, Haifa 3200003, Israel}
{Emails: cwang@campus.technion.ac.il, natalys@cs.technion.ac.il, yaakobi@cs.technion.ac.il}
}
\date{\today}
\begin{document}
\maketitle
\begin{abstract}
    As DNA-based data storage evolves, protecting user privacy during data retrieval has become increasingly important. We study sequential random sampling DNA private information retrieval (SRS DNA PIR) with multiple colluding random sampling servers, where the database is partitioned into servers of equal size. We investigate the tradeoff between the \textit{download cost}, defined as the expected number of queries, and the \textit{privacy leakage}, measured by mutual information. We derive lower bounds on this tradeoff, including a bound given by an optimization problem. This bound is tight when each server stores two files, and we construct schemes that attain it. For servers of any size, we construct schemes that apply a single-server scheme to a randomly selected subset of servers. 
\end{abstract}

\section{Introduction}
With the rapid growth of digital data, the demand for efficient and reliable storage media continues to increase~\cite{rydning2022worldwide}. DNA-based data storage is a promising approach due to its high information density and durability~\cite{church2012next,goldman2013towards}. In such systems, data is encoded into synthetic DNA strands and recovered through sequencing~\cite{anavy2019data,organick2018random,yazdi2017portable,erlich2017dna}. The cost and latency of sequencing motivate efforts to reduce the number of reads required for recovery~\cite{chandak2019improved,preuss2024sequencing,daniella2023cover}. Since strands are stored in unordered pools, random sampling provides a natural model for the readout process. This has motivated studies of sequencing coverage~\cite{bodur2025random,bertuzzo2025coverage,abraham2024covering,tan2025labeled,barak2025conjectures} and of the expected number of samples required to recover a desired information file~\cite{gruica2024combinatorial,gruica2024geometry,boruchovsky2025making}.

Beyond retrieval efficiency, users may also wish to conceal which file they are interested in. Private information retrieval (PIR), introduced by Chor \emph{et al.}~\cite{chor1998private}, addresses the problem of retrieving a file without revealing its identity to the servers. Subsequent research has investigated communication efficiency, storage overhead, and the fundamental limits of PIR~\cite{dvir2016pir,fazeli2015codes,ghasemi2025improved,sun2017capacity,banawan2018capacity,zhang2019general}. However, most existing PIR schemes require direct access to specific files and allow servers to compute over the stored data to generate responses. Both assumptions are violated in many DNA-based storage systems, where data are stored as unordered DNA strands and algebraic operations are unavailable. Motivated by this, several PIR schemes tailored to DNA-based data storage have been proposed~\cite{wang2026sequential,erdougan2026private}. 

In~\cite{wang2026sequential}, the authors introduced the sequential random sampling DNA PIR (SRS DNA PIR) scheme for a single server. In their setting, a DNA pool is modeled as a random sampling server storing a database $\mathcal{X}$ of size $n$. Each query independently returns a file chosen uniformly at random from $\mathcal{X}$ with replacement. For a desired file $X$ chosen uniformly from $\mathcal{X}$, the scheme decides after each response whether to continue sampling or stop, and guarantees that $X$ is obtained before termination. The download cost is the expected number of queries, and the privacy leakage is the mutual information between the desired file identity and the response. For example, stopping as soon as $X$ is received has download cost $n$, but reveals $X$ as the last returned file. In contrast, collecting all files provides perfect privacy at a download cost of $nH_n$, where $H_n$ is the $n$-th harmonic number. These strategies illustrate the tradeoff between retrieval efficiency and user privacy.
Beyond a single server, Erdo{\u{g}}an \emph{et al.}~\cite{erdougan2026private} consider two servers, each storing the entire database. They propose PIR protocols with selective access via primers.

In this paper, we extend the random sampling model of~\cite{wang2026sequential} to \textit{multiple} DNA pools. The $n=m\alpha$ files are partitioned, without replication, among $m$ random sampling servers, each storing $\alpha$ files. 
 The user queries one server at a time and may adapt the server selection and stopping decisions to the observed responses. This can offer advantages over the single-server setting. For example, stopping as soon as $X$ is received requires download cost $\alpha$, while retrieving all files requires download cost $nH_{\alpha}$. However, the servers may collude to share their observations, which can increase privacy leakage. While collusion is well studied in classical PIR~\cite{sun2018capacity,freij2017private}, to the best of our knowledge, this is the first work that studies colluding servers in DNA-based PIR. We focus on the worst-case scenario where all servers collude, and measure privacy leakage with respect to the response across all servers. Unlike the single-server setting, privacy now depends not only on when sampling stops but also on which servers are queried.

We investigate the tradeoff between download cost and privacy leakage in this setting.
General lower bounds are established, including a tight bound for $\alpha = 2$. An optimal construction achieving the bound for $\alpha = 2$ is obtained. For arbitrary server sizes, a family of schemes is developed by applying single-server SRS DNA PIR schemes to randomly selected subsets of servers. Numerical comparisons illustrate the achievable tradeoffs and their gaps from the lower bounds.

The paper is organized as follows. Section~\ref{sec:background} presents the system model. Section~\ref{sec:lower_bound} derives lower bounds, and Section~\ref{sec:constructions} presents constructions. 
Section~\ref{sec:numerical_comparison} provides numerical comparisons. Section~\ref{sec:conclusion} concludes the paper. 

\section{System Model and Background}\label{sec:background}

For integers $m\leq n$, let $[m, n] = \{m, m+1, \ldots, n\}$ and $[n] = [1, n]$. For $\bx = (x_1, x_2, \ldots, x_n)$ and $A\subset [n]$, we denote by $\bx_{A}$ the subsequence of $\bx$ corresponding to the indices in $A$. 
Let $H(X)$, $H(X\mid Y)$, and $I(X; Y)$ denote the entropy, the conditional entropy, and the mutual information of random variables $X$ and $Y$.
For an integer $k > 0$, let $H_k = \sum_{i = 1}^k \frac{1}{i}$ be the $k$-th harmonic number and let $H_0 = 0$. 
For a real number $x \geq 0$, let $\bar{H}_x = \psi(x+1)+\gamma$ be the continuous harmonic number, where $\psi$ is the digamma function and $\gamma$ is the Euler-Mascheroni constant. It is increasing and concave in $x$, and $\bar{H}_k = H_k$ for every integer $k\geq 0$, see~\cite{abramowitz1965handbook}.

We consider a DNA-based data storage system in which the DNA strands are stored in multiple pools of DNA molecules, and the user can access the data by randomly sampling from each of the pools. 
Following~\cite{wang2026sequential}, we model each pool as a random sampling server.

\begin{definition}[Random sampling server~\cite{wang2026sequential}]
    A server storing $\alpha$ files is said to be a random sampling server if each query returns one of the $\alpha$ stored files independently and uniformly at random, with replacement.
\end{definition}

Thus, to retrieve a desired file, the user needs to query the server repeatedly until that file is returned. 
Example~\ref{exa:random_sampling_server} illustrates this process.

\begin{example}\label{exa:random_sampling_server}
    Suppose that a random sampling server stores three files $x_1$, $x_2$, and $x_3$, and that the user wishes to retrieve $x_3$. The returned files may be $x_2, x_1, x_1, x_2, x_3$. In this case, five queries are needed.
\end{example}

Besides retrieving the desired file, the user may want to keep the identity of the desired file private from the servers. 
Since some of the servers may collude to share information, we consider the worst case, in which all servers collude. 
Throughout the paper, a file is identified with its index, so privacy concerns the identity of the desired file and not its content. 
The following definition extends~\cite[Definition~2]{wang2026sequential} from one server to $m$ colluding servers.

\begin{definition}[SRS DNA PIR scheme]
    Consider a set of $n$ files $\cX = \{x_1, x_2, \ldots, x_n\}$ and $m$ random sampling servers, where server $j$ stores a subset $\cX_j\subseteq\cX$ and $\bigcup_{j\in[m]}\cX_j = \cX$. A sequential random sampling DNA private information retrieval (SRS DNA PIR) scheme $\cS = \cS(m, (\cX_j)_{j\in [m]})$ consists of the following parts:
        \begin{itemize}
        \item \textbf{Desired file:} The file $X \in \cX$ that the user wishes to retrieve.
        \item \textbf{Strategy:} The user queries one server at a time. At each step, the scheme $\cS$ determines, based on $X$ and on the files returned so far, whether to stop or which server to query next; this decision may be randomized. Let $\tau$ be the random variable denoting the stopping time, i.e., the number of queries made, and let $R = (r_1, r_2, \ldots, r_{\tau})$ denote the sequence of returned files.
        \item \textbf{Recovery:} Once the scheme terminates, $\cS$ outputs the desired file $X$ from $R$, i.e., $X \in R$.
    \end{itemize}
    The \textbf{evaluation metrics} of $\cS$ are defined as follows:
    \begin{itemize}
        \item \textbf{Download cost:} The expected number of queries until the scheme terminates: $E_{\cS} := \bbE[\tau]$.
        \item \textbf{Privacy leakage:} The amount of information that the colluding servers gain about $X$: $L_{\cS} := I(X; R)$.
    \end{itemize}
\end{definition}

We further assume that each file is equally likely to be the desired one, i.e., $X$ is uniformly distributed over $\cX$ and $H(X) = \log n$. We focus on the case where every file is stored on exactly one server and all servers have the same size $\alpha$. Thus, $n = m\alpha$ and $\cX_1,\ldots,\cX_m$ form a partition of $\cX$. Let $\cX_j = \{x_{\alpha (j - 1) + 1}, \ldots, x_{\alpha j}\}$ for $j\in[m]$. We identify server $j$ with its set of files $\cX_j$. 
We write $\cS(m, \alpha)$ for an SRS DNA PIR scheme in this setting.

\begin{example}\label{exa:SRS_DNA_PIR_scheme}
    Let $j_0\in[m]$ be the index of the server that stores the desired file, i.e., $X\in\cX_{j_0}$. Consider the following two SRS DNA PIR schemes. 
    \begin{itemize}
        \item \emph{Least-cost scheme $\cS_{\ell}$:} Query $\cX_{j_0}$ until $X$ is returned.
        \begin{itemize}
            \item Each query returns $X$ with probability $1/\alpha$. Therefore, $E_{\cS_{\ell}} = \alpha$.
            \item The scheme stops as soon as $X$ is returned, so $X$ is the last file in $R$. Hence, $R$ determines $X$, and $L_{\cS_{\ell}} = H(X) - H(X\mid R) = \log n$.
        \end{itemize}
        \item \emph{Full-download scheme $\cS_f$:} Query the servers one by one in a fixed order, each until all its $\alpha$ files have been returned.
        \begin{itemize}
            \item The expected number of queries to obtain all files of one server is $\alpha H_{\alpha}$. Therefore, $E_{\cS_f} = nH_{\alpha}$.
            \item The queries do not depend on $X$. Hence, $R$ is independent of $X$, and $L_{\cS_f} = 0$.
        \end{itemize}
    \end{itemize}
\end{example}

Example~\ref{exa:SRS_DNA_PIR_scheme} illustrates the tradeoff between the download cost and the privacy leakage. Our goal is to characterize this tradeoff. 

The \emph{download cost--privacy leakage region} $\cC(m, \alpha)$ is the set of pairs $(E_\cS, L_\cS)$ of all schemes $\cS = \cS(m, \alpha)$.

\begin{problem}\label{prob:main}
For given $m$ and $\alpha$, determine
\begin{align*}
L_{m, \alpha}(E) &= \inf\{L : (E, L) \in \cC(m, \alpha)\},\\
E_{m, \alpha}(L) &= \inf\{E : (E, L) \in \cC(m, \alpha)\}.
\end{align*}
\end{problem}

The single-server case was studied in~\cite{wang2026sequential}, where the following family of schemes was constructed.
\begin{theorem}[{\cite[Theorem~3]{wang2026sequential}}]\label{thm:randomized_DNA_PIR_single}
    For each $r\in[\alpha]$, let $\{p_i^r\}_{i=0}^{\alpha-r}$ be a probability distribution. Then there exists an SRS DNA PIR scheme $\cS_{\mathcal{R}} = \cS_{\mathcal{R}}(1, \alpha)$ with
    \begin{align*}
        E_{\cS_{\mathcal{R}}} &= \alpha + \sum_{r = 1}^{\alpha}\sum_{i = 0}^{\alpha - r}p_i^r \left(H_{\alpha-r} - H_{\alpha-r-i}\right),\\
        L_{\cS_{\mathcal{R}}} &= \log \alpha + \frac{1}{\alpha}\sum_{r = 1}^{\alpha}\sum_{j = 1}^{r}p_{r - j}^j\left(\log p_{r - j}^j - \log \sum_{\ell = 1}^{r} p_{r - \ell}^{\ell}\right).
    \end{align*}
\end{theorem}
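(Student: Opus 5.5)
We construct $\cS_{\mathcal{R}}$ explicitly and then compute its two metrics. Let $r\in[\alpha]$ denote the number of distinct files already seen when $X$ is first returned, with $X$ counted among them. The scheme is as follows. Query the single server until $X$ appears. At that moment $r$ distinct files are known and $X$ is the last of them. Then draw $i\in[0,\alpha-r]$ with probability $p_i^r$, and keep querying until $i$ \emph{new} distinct files have appeared. Stop after that.

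By symmetry, $X$ is equally likely to be the $r$-th distinct file discovered for each $r$. This is because the order in which distinct files are first seen is a uniformly random permutation, independent of $X$. So $\Pr[r]=1/\alpha$.

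\textbf{Download cost.} Querying until $X$ appears is geometric with mean $\alpha$. Once $\alpha-r$ files remain unseen, collecting $i$ new ones costs $\sum_{k=0}^{i-1}\frac{\alpha}{\alpha-r-k}=\alpha(H_{\alpha-r}-H_{\alpha-r-i})$ in expectation, by the coupon-collector stage decomposition. Averaging over $r$ (each with weight $1/\alpha$) and over $i$ gives
\[
E=\alpha+\sum_{r=1}^{\alpha}\sum_{i=0}^{\alpha-r}p_i^r\,(H_{\alpha-r}-H_{\alpha-r-i}).
\]
The factor $\alpha\cdot\frac1\alpha$ cancels, which matches the stated formula.

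\textbf{Leakage.} Write $L=H(X)-H(X\mid R)=\log\alpha-H(X\mid R)$. Fix an observed response $R$ with $s$ distinct files, and list them in order of first appearance. The only candidates for $X$ are the $j$-th distinct file for $j\in[s]$. A candidate must also satisfy one more condition: the stopping point must fall exactly when the $s$-th distinct file first appears. Either $i=s-j>0$, or $i=0$ with $j=s$, which again reads as $i=s-j$.

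Hypothesis $j$ therefore requires two events: $r=j$, which is consistent with the observed sequence, and the drawn $i=s-j$. The joint probability of $R$ and $X$ being the $j$-th file factorizes into three parts:
\begin{itemize}
\item $\frac1\alpha$ for the identity of $X$;
\item the probability of the sampling sequence, which is $\alpha^{-\tau}$ and is identical across hypotheses;
\item the factor $p^{j}_{s-j}$.
\end{itemize}
Hence the posterior is $\Pr[X=j\text{-th}\mid R]=p^j_{s-j}/\sum_{\ell=1}^{s}p^{\ell}_{s-\ell}$.

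Next, group the responses by $s$. The total probability of $s$ distinct files, summed over all sequences, is $\frac1\alpha\sum_{j=1}^s p^j_{s-j}$. This holds because for each $j$ the sequences consistent with that hypothesis have total probability $\Pr[r=j]\,p^j_{s-j}$. Substituting into $H(X\mid R)=-\sum \Pr[R]\sum_j \Pr[j\mid R]\log\Pr[j\mid R]$ gives
\[
-\frac1\alpha\sum_{s}\sum_{j} p^j_{s-j}\log\frac{p^j_{s-j}}{\sum_\ell p^\ell_{s-\ell}}.
\]
Relabeling $s$ as $r$ yields the stated $L$.

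\textbf{Main obstacle.} The delicate step is the posterior calculation. One must check that the sampling-sequence probability really is hypothesis-independent. It is: each query has probability $1/\alpha$ regardless of $X$, and the stopping rule depends on $X$ only through $(r,i)$. One must also check that no other hypotheses survive, for example an $X$ that was never seen, or one that would have forced stopping earlier.
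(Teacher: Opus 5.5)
Your proposal is correct. The paper imports this result from \cite{wang2026sequential} without reproving it, but your scheme (query until $X$ appears as the $r$-th distinct file, then collect $i$ further distinct files with probability $p^r_i$) and your leakage computation are the one-server specialization of the Bayes-rule argument the paper gives in the proof of Theorem~\ref{lem:performance_optimal_scheme_alpha_2}: there the hypothesis-independent sampling factor yields the posterior $p_{\bd,j,k}/p_{\bd}$, and here the factor $\alpha^{-\tau}$ yields $p^j_{s-j}/\sum_{\ell=1}^{s}p^{\ell}_{s-\ell}$. Your cost computation via $\Pr[r]=1/\alpha$ (cf.\ Lemma~\ref{lem:index_desired_file}) and the coupon-collector stages is the standard route, so nothing essential is missing.
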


\section{Lower Bounds on the Cost-Leakage Tradeoff}\label{sec:lower_bound}
In this section, we investigate lower bounds on the download cost and privacy leakage of an SRS DNA PIR scheme $\cS=\cS(m,\alpha)$. We start with a basic lower bound on the download cost.

\begin{claim}\label{cla:lower_bound_1}
    For any $m$, $\alpha$, and $0\leq L \leq \log n$, it holds that $E_{m, \alpha}(L) \geq \alpha$.
\end{claim}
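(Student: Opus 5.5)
The plan is to show that every scheme $\cS=\cS(m,\alpha)$ satisfies $E_\cS\ge\alpha$, regardless of its leakage. Since $E_{m,\alpha}(L)$ is an infimum over schemes with leakage $L$, the claim follows immediately, including the convention that the infimum of an empty set is $+\infty\ge\alpha$. The idea is that the desired file $X$ must appear in $R$, and each query can return $X$ only with probability at most $1/\alpha$.

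\textbf{Step 1: fix the desired file.} Condition on $X=x$, where $x\in\cX_{j_0}$. For $t\ge 1$, let $A_t$ be the event that the scheme makes a $t$-th query and that this query returns $x$.

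\textbf{Step 2: bound each query's success probability.} The decision to make query $t$ and the choice of server depend only on $x$, on $r_1,\dots,r_{t-1}$, and on independent randomness. Given this history, the response is uniform over the chosen server's $\alpha$ files. That server is either $\cX_{j_0}$, in which case it returns $x$ with probability exactly $1/\alpha$, or another server, in which case it returns $x$ with probability $0$. Hence
\[
\Pr(A_t\mid X=x)\le \frac{1}{\alpha}\Pr(\tau\ge t\mid X=x).
\]

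\textbf{Step 3: sum over queries.} Recovery requires $x\in R$, so at least one $A_t$ occurs with probability $1$. A union bound then gives
\[
1\le\sum_{t\ge1}\Pr(A_t\mid X=x)\le\frac1\alpha\sum_{t\ge1}\Pr(\tau\ge t\mid X=x)=\frac{1}{\alpha}\,\bbE[\tau\mid X=x].
\]
If $\bbE[\tau\mid X=x]=\infty$ the bound holds trivially. Averaging over $x$ gives $E_\cS\ge\alpha$.

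There is no real obstacle here. The only point needing care is Step 2: the adaptive and randomized choice of server is measurable with respect to the past, so the conditional success probability of each query is at most $1/\alpha$. This is essentially Wald's identity applied to the count of successful draws from $\cX_{j_0}$.
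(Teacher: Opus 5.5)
Your proof is correct and uses essentially the same idea as the paper: $X$ must appear in $R$, and each query returns it with probability at most $1/\alpha$ (exactly $1/\alpha$ from $\cX_{j_0}$, zero from any other server). The paper states this informally as ``the expected number of queries to server $j_0$ until $X$ is returned is $\alpha$.'' Your union bound $1\le\sum_{t}\Pr(A_t\mid X=x)\le\frac{1}{\alpha}\bbE[\tau\mid X=x]$ makes the treatment of adaptive, randomized server choices and stopping fully rigorous.
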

\begin{IEEEproof}
    For any desired file $X = x_{\alpha (j - 1) + k}$ stored on server $j$, the user must query server $j$ until $X$ is returned. As the probability of obtaining $X$ from server $j$ in a single query is $1/\alpha$, the expected number of queries to obtain $X$ from server $j$ is $\alpha$. Therefore, the expected number of queries for any SRS DNA PIR scheme is at least $\alpha$, i.e., $E_{m, \alpha}(L) \geq \alpha$.
\end{IEEEproof}

To provide a more refined lower bound, we introduce some notation. For a scheme $\cS = \cS(m,\alpha)$, let $M$ be the random variable denoting the number of servers that are queried, and let $B_1, \ldots, B_M$ be the random variables denoting these servers, in the order in which they are first queried. Let $\bm{D} = (D_1, D_2, \ldots, D_m)$, where $D_j$ is the random variable denoting the number of distinct files obtained from the $j$-th queried server $B_j$, and $D_j = 0$ for $j>M$. Let
\[\cD_{m, \alpha}:=\bigcup_{s = 1}^m\{(d_1, \ldots, d_s, 0, \ldots, 0): d_1,\ldots, d_s\in [\alpha]\}.\]
The set $\cD_{m,\alpha}$ contains all possible values of $\bm{D}$. A vector $(d_1, \ldots, d_s, 0, \ldots, 0)\in\cD_{m,\alpha}$ describes the case where $s$ servers are queried, the $j$-th queried server returns $d_j$ distinct files, and the remaining $m-s$ servers are not queried.

Let $g(x) = \alpha(\bar{H}_\alpha - \bar{H}_{\alpha - x})$ for a real number $0\leq x\leq\alpha$. For an integer $k$, the value $g(k)$ is the expected number of queries needed to obtain $k$ distinct files from a server~\cite{daniella2023cover}. Since $\bar{H}_x$ is increasing and concave, $g$ is increasing and convex. Let $R^{B_j}$ be the sequence of files returned by server $B_j$, which is empty for $j>M$. Lemma~\ref{lem:number_of_queries_each_server} bounds the number of queries $|R^{B_j}|$ to server $B_j$ in terms of $D_j$.


\begin{lemma}\label{lem:number_of_queries_each_server}
    For any $m$, $\alpha$, SRS DNA PIR scheme $\cS = \cS(m, \alpha)$, and $j\in[m]$, it holds that $\bbE[|R^{B_j}|]\geq \bbE[g(D_j)]\geq g(\bbE[D_j])$.
\end{lemma}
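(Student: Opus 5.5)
The second inequality is Jensen's inequality, since $g$ is convex on $[0,\alpha]$ and $D_j$ takes values in $[0,\alpha]$. The substance is therefore the first inequality, $\bbE[|R^{B_j}|]\geq \bbE[g(D_j)]$. I will prove it by an optional-stopping (martingale) argument on the coupon-collector process at server $B_j$.

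Fix $j$. Work on the event $\{M\geq j\}$; on its complement both sides are zero, since $g(0)=0$ and $R^{B_j}$ is empty. Consider the successive queries made to server $B_j$. Let $N_t$ be the number of distinct files of $B_j$ seen after its first $t$ queries, and $T=|R^{B_j}|$ the total number of queries to $B_j$. Define
\[ Z_t = t - g(N_t). \]
Given the past, each query to $B_j$ returns a uniformly random file of that server, independent of everything before. This holds even though the scheme chooses which server to query and when to stop based on all previous responses, including $X$ and any internal randomness. If $N_t=k<\alpha$, the next query to $B_j$ yields a new file with probability $(\alpha-k)/\alpha$. Since $g(k+1)-g(k)=\alpha/(\alpha-k)$, we get
\[ \bbE[g(N_{t+1})-g(N_t)\mid \text{past}] = 1. \]
So $Z_t$ is a martingale when indexed by queries to $B_j$. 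Queries to other servers and the identity of $B_j$ enter only as past information.

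The step needing care is to make this rigorous. Define a filtration over the global query index, so that $Z$ changes only at steps where $B_j$ is queried. Then check that "the next query goes to $B_j$" is predictable, i.e., determined by the past together with independent randomization. Under that filtration, $T$ is a stopping time. Optional stopping needs integrability. I will first apply it to $T\wedge K$, giving
\[ \bbE[T\wedge K]=\bbE[g(N_{T\wedge K})]. \]
Then let $K\to\infty$: the left side converges by monotone convergence, and the right side by monotone convergence since $g$ is increasing and $N$ is nondecreasing. If $\bbE[\tau]=\infty$ the claim is trivial. The conclusion is $\bbE[|R^{B_j}|]=\bbE[g(D_j)]$, which is in fact an equality and in particular gives the stated inequality.

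Finally, apply Jensen's inequality, $\bbE[g(D_j)]\geq g(\bbE[D_j])$, using the convexity of $g$ noted before the lemma.
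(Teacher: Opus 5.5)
Your route matches the paper's: optional stopping for $t-g(N_t)$ (the paper's $Y_{j,t}=T_j(t)-g(D_j(t))$, indexed by the global query count), followed by Jensen. However, two steps are wrong as written: your claim that $Z_t$ is a martingale, and your conclusion that $\bbE[|R^{B_j}|]=\bbE[g(D_j)]$.

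\textbf{The martingale claim.} Your identity, that the conditional expectation of $g(N_{t+1})-g(N_t)$ given the past equals $1$, rests on $g(k+1)-g(k)=\alpha/(\alpha-k)$. This is available only for $k<\alpha$. Once $N_t=\alpha$, a further query to $B_j$ cannot return a new file, so $g(N_{t+1})=g(N_t)$ and $Z_{t+1}-Z_t=1$ deterministically.

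Nothing in the definition of a scheme forbids such queries. Suppose a scheme queries only the server holding $X$, collects all $\alpha$ of its files, and then queries it once more. Then $\bbE[|R^{B_1}|]=g(\alpha)+1$, while $\bbE[g(D_1)]=g(\alpha)$. So the equality fails.

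The correct statement is that $Z$ is a \emph{sub}martingale: martingale steps while $N_t<\alpha$, and increments of exactly $+1$ afterwards. This is the case split the paper makes explicitly. It gives $\bbE[T\wedge K]\geq\bbE[g(N_{T\wedge K})]$ instead of equality, and your limit argument then yields the required inequality.

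\textbf{The stopping time.} In the global filtration the stopping time should be $\tau\wedge K$, not $T$. $T$ counts queries to $B_j$ and is not a global time. It is not a stopping time for the $B_j$-indexed clock either, since whether $B_j$ is queried again can depend on later responses from other servers. With $Z$ frozen between queries to $B_j$, the stopped value is $T_j(\tau\wedge K)-g(D_j(\tau\wedge K))$.

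With both fixes, your truncation-plus-monotone-convergence argument is valid. It is slightly cleaner than the paper's appeal to optional stopping with bounded increments and $\bbE[\tau]<\infty$, because it needs only $\tau<\infty$ almost surely. You can therefore drop your separate remark about $\bbE[\tau]=\infty$, which is not actually trivial for a fixed $j$.
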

\begin{IEEEproof}
    For any $t\geq 0$, let $T_j(t)$ be the random variable denoting the number of queries to the server $B_j$. Therefore, $t = \sum_{j = 1}^{m}T_j(t)$. Let $D_j(t)$ be the number of distinct files obtained from the server $B_j$ after $t$ queries. Consider the filtration $\cF_t$ where $\cF_t$ is the $\sigma$-algebra generated by the choice of servers, $X$, $t$, and the response from the servers. Let $Y_{j, t} = T_j(t) - g(D_{j}(t))$ and $A_{j, t}$ be the event that the $(t + 1)$-th query is to the server $B_j$. When $D_{j}(t) < \alpha$, we have
    \[\bbE[Y_{j, t + 1}|\cF_t, A_{j, t}^c] = Y_{j, t}\]
    and
    \begin{align*}
        &\bbE[Y_{j, t + 1}|\cF_t, A_{j, t}] \\
        = &\bbE[T_j(t) + 1 - g(D_{j}(t + 1))|\cF_t, A_{j, t}]\\
        = &T_j(t) + 1 - (g(D_{j}(t)) \\
        &+ \frac{\alpha - D_{j}(t)}{\alpha}(g(D_{j}(t) + 1) - g(D_{j}(t))))\\
        = &Y_{j, t} + 1 - \frac{\alpha - D_{j}(t)}{\alpha}(g(D_{j}(t) + 1) - g(D_{j}(t)))\\
        = &Y_{j, t} + 1 - \frac{\alpha - D_{j}(t)}{\alpha}\cdot \frac{\alpha}{\alpha - D_{j}(t)} = Y_{j, t}.
    \end{align*}
    If $D_{j}(t) = \alpha$, then each query to the server $B_j$ only increases $|R^{B_j}|$. Therefore, $Y_{j, t}$ is a submartingale. Note that $\bbE[|Y_{j, t + 1}-Y_{j, t}|\mid \cF_t]\leq \alpha$ and the stopping time $\tau$ satisfies $\bbE[\tau] = E_{\cS} < \infty$. By the optional stopping theorem~\cite[Theorem~4.8.5]{durrett2019probability}, we have $\bbE[|R^{B_j}| - g(D_{j})] \geq 0$. This implies that
    \[\bbE[|R^{B_j}|]\geq \bbE[g(D_{j})]\geq g(\bbE[D_j]),\]
    where the last inequality follows from the fact that $g(x)$ is convex and the Jensen's inequality. 
\end{IEEEproof}

Lemma~\ref{lem:number_of_queries_each_server} yields the following lower bound on the download cost.

\begin{theorem}\label{thm:lower_bound_2}
    For any $m$, $\alpha$, and $0\leq L \leq \log n$, it holds that $E_{m, \alpha}(L)\geq m\cdot g(\alpha 2^{-L})$.
\end{theorem}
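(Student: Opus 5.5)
The plan is to combine Lemma~\ref{lem:number_of_queries_each_server} with an entropy bound relating the leakage to the total number of distinct files collected. We then aggregate over servers using convexity and monotonicity of $g$.

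\emph{Step 1 (leakage forces many distinct files).} Write $N=\sum_{j=1}^m D_j$; this is the number of distinct files in $R$, since the servers store disjoint sets. By the recovery condition $X\in R$, conditioned on any realization $R=r$ the file $X$ is supported on the set of distinct files appearing in $r$. Hence $H(X\mid R=r)\le \log N(r)$. Averaging over $r$ and applying Jensen's inequality to the concave function $\log$ gives
\[
H(X\mid R)\le \bbE[\log N]\le \log \bbE[N].
\]
With $H(X)=\log n$, we get $L_\cS = I(X;R)=\log n - H(X\mid R)\ge \log n-\log\bbE[N]$. Rearranging, $\bbE[N]\ge n2^{-L_\cS}=m\alpha 2^{-L_\cS}$.

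\emph{Step 2 (cost per server).} Every query goes to exactly one of the servers $B_1,\dots,B_M$. Hence $\tau=\sum_{j=1}^m |R^{B_j}|$, with empty sequences for $j>M$. By Lemma~\ref{lem:number_of_queries_each_server}, $\bbE[|R^{B_j}|]\ge g(\bbE[D_j])$ for every $j\in[m]$. For $j>M$ this is consistent, since $D_j=0$ and $g(0)=0$. Summing, and using convexity of $g$ once more (Jensen over the uniform average of the $m$ indices), we obtain
\[
E_\cS=\bbE[\tau]\ge \sum_{j=1}^m g(\bbE[D_j])\ge m\, g\Big(\frac{1}{m}\sum_{j=1}^m \bbE[D_j]\Big)=m\,g\big(\bbE[N]/m\big).
\]

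\emph{Step 3 (combine).} Step 1 gives $\bbE[N]/m\ge \alpha 2^{-L_\cS}$. Also $\bbE[N]/m\le\alpha$ and $\alpha 2^{-L_\cS}\in[\alpha/n,\alpha]$, so both points lie in the domain $[0,\alpha]$ of $g$. Since $g$ is increasing there, $E_\cS\ge m\,g(\alpha 2^{-L_\cS})$. Now take any pair $(E,L)\in\cC(m,\alpha)$. The function $L\mapsto m\,g(\alpha 2^{-L})$ depends only on $L$, so taking the infimum over schemes with leakage $L$ yields $E_{m,\alpha}(L)\ge m\,g(\alpha2^{-L})$.

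The only step needing care is Step 1. There, conditioning on $R=r$ gives a distribution of $X$ supported on the distinct files of $r$, and the two Jensen steps must go in the right directions: concavity of $\log$ for the entropy bound, convexity of $g$ for the cost bound. The rest is bookkeeping: the identity $\tau=\sum_j|R^{B_j}|$ and the convention $D_j=0$ for unqueried positions, which Lemma~\ref{lem:number_of_queries_each_server} already handles.
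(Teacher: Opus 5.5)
Your proof is correct and takes essentially the same route as the paper: Lemma~\ref{lem:number_of_queries_each_server} plus Jensen's inequality for the convex $g$ give $E_\cS\ge m\,g\bigl(\frac{1}{m}\sum_j\bbE[D_j]\bigr)$, the recovery condition and concavity of $\log$ give $H(X\mid R)\le\bbE[\log\sum_j D_j]\le\log\bbE[\sum_j D_j]$, and monotonicity of $g$ combines the two bounds. Your Jensen step is stated more cleanly than the paper's displayed intermediate inequality, where the argument of $g$ carries a spurious ``$\alpha-(\cdot)$'' that the rest of the paper's argument does not use.
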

\begin{IEEEproof}
    Consider an SRS DNA PIR scheme $\cS = \cS(m, \alpha)$ with $L_{\cS} = L$. By Lemma~\ref{lem:number_of_queries_each_server}, we have
    \begin{align*}
        E_{\cS} &= \bbE[|R|] = \sum_{j = 1}^{m}\bbE[|R^{B_j}|]\geq \sum_{j = 1}^{m}g(\bbE[D_j]).
    \end{align*}
    As $g(x)$ is convex, by the Jensen's inequality, we have
    \[E_{\cS}\geq mg\left(\alpha - \left(\frac{1}{m}\sum_{j = 1}^{m}\bbE[D_j]\right)\right).\]
    Furthermore, as $X$ is in the responses, we have 
    \[L_{\cS} = I(X; R) = H(X) - H(X|R)\geq \log n - \bbE\left[\log \sum_{j = 1}^{m} D_j\right].\]
    As $\log x$ is concave, we have
    \[\log \bbE\left[\sum_{j = 1}^{m} D_j\right]\geq \bbE\left[\log \sum_{j = 1}^{m} D_j\right]\geq \log n - L_{\cS},\]
    which implies that $\frac{1}{m}\sum_{j = 1}^{m}\bbE[D_j]\geq \alpha 2^{-L_{\cS}} = \alpha 2^{-L}$.
    In total, we have 
    \begin{align*}
        E_{\cS} \geq mg(\alpha 2^{-L}).
    \end{align*}
\end{IEEEproof}

Corollary~\ref{cor:lower_bound_3} follows from Claim~\ref{cla:lower_bound_1} and Theorem~\ref{thm:lower_bound_2}.
\begin{corollary}\label{cor:lower_bound_3}
        For any $m$, $\alpha$, and $0\leq L \leq \log n$, it holds that $E_{m, \alpha}(L)\geq \max\{\alpha,\, m\cdot g(\alpha 2^{-L})\}$.
\end{corollary}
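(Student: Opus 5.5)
The corollary combines the two lower bounds already established, so the plan is simply to invoke both and take the larger one. Fix $m$, $\alpha$, and $L\in[0,\log n]$, and let $E$ be any value with $(E,L)\in\cC(m,\alpha)$, so that $(E,L)=(E_\cS,L_\cS)$ for some scheme $\cS=\cS(m,\alpha)$. I will then show $E\ge\alpha$ and $E\ge m\,g(\alpha 2^{-L})$ separately.

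\textbf{Bound $E\ge\alpha$.} Claim~\ref{cla:lower_bound_1} gives this directly. Its argument applies to every scheme regardless of leakage: the desired file must eventually be returned by its own server, and each query to that server succeeds with probability only $1/\alpha$.

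\textbf{Bound $E\ge m\,g(\alpha 2^{-L})$.} This is Theorem~\ref{thm:lower_bound_2}. Its proof uses $L_\cS=L$ only through the inequality $\log n-L_\cS\le \bbE[\log\sum_j D_j]$, followed by monotonicity of $g$. Hence the bound holds pointwise for every scheme with leakage $L$.

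Since both inequalities hold for every admissible $E$, every such $E$ is at least $\max\{\alpha,\, m\,g(\alpha 2^{-L})\}$. Taking the infimum over $E$ gives the stated bound on $E_{m,\alpha}(L)$.

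The only point needing care is the edge case where no scheme attains leakage exactly $L$. The infimum of the empty set is then $+\infty$ by convention, and the inequality holds trivially. There is no real obstacle, since the corollary just merges two earlier bounds that are each valid on their own.
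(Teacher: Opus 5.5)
Your proposal is correct and follows the paper, which obtains the corollary directly by combining Claim~\ref{cla:lower_bound_1} with Theorem~\ref{thm:lower_bound_2}. Both results are already stated as lower bounds on $E_{m,\alpha}(L)$ itself, so taking the maximum is immediate; your per-scheme unpacking and the empty-set remark are harmless but not needed.
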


Next, we provide a lower bound on the privacy leakage in terms of the distribution of $\bm{D}$. We start with some notation. 
For $j\in[M]$, let $Z_{j,1},\ldots,Z_{j,D_j}$ be the $D_j$ distinct files in $R^{B_j}$, listed in the order of their first appearance. There is a unique pair of random indices $(J,K)$ such that $X=Z_{J,K}$. In particular, $B_J$ is the server that stores $X$, and $K$ is the position of $X$ among the distinct files obtained from this server. Given $R$, the desired file $X$ and its index pair $(J,K)$ determine each other. Moreover, $\bm{D}$ is a function of $R$. Therefore, we have
\[H(X\mid R) = H(J,K\mid R) \leq H(J,K\mid\bm{D}).\]
Let $p_{\bd,j,k} = \Pr(\bm{D}=\bd,J=j,K=k)$, $\bm{p}_{\bd} = (p_{\bd,j,k})_{j,k}$, and $p_{\bd} = \Pr(\bm{D}=\bd) = \sum_{j, k}p_{\bd,j,k}$. The definition of conditional entropy gives
\[H(J,K\mid\bm{D}) = \sum_{\bd}p_{\bd}H(J,K\mid\bm{D} = \bd) = \sum_{\bd}\phi(\bm{p}_{\bd}),\]
where the sums are over $\bd\in\cD_{m,\alpha}$, and $\phi(\bv) = (\sum_{\ell=1}^{n}v_\ell)\log \sum_{\ell=1}^{n}v_\ell-\sum_{\ell=1}^{n}v_\ell\log v_\ell$ for any $\bv = (v_1, \ldots, v_n)$ with $v_\ell \geq 0$, where $0\log 0 = 0$. Therefore, $L_{\cS} \geq \log n - \sum_{\bd}\phi(\bm{p}_{\bd})$, and this bound depends on $\cS$ only through the probabilities $\{p_{\bd,j,k}\}$.

\begin{example}
    Consider the schemes $\cS_{\ell}$ and $\cS_f$ of Example~\ref{exa:SRS_DNA_PIR_scheme}.
    \begin{itemize}
        \item For $\cS_{\ell}$, only the server that stores $X$ is queried, so $J=1$ and $\bm{D} = (D_1, 0, \ldots, 0)$. The scheme stops when $X$ is returned, so $X$ is the last of the $D_1$ distinct files, i.e., $K=D_1$. This results in $p_{(k, 0, \ldots, 0), 1, k} = 1/\alpha$ for all $k\in [\alpha]$ and $p_{\bd,j,k} = 0$ for other $(\bd,j,k)$, which gives $H(J,K\mid \bm{D}) = 0$.
        \item For $\cS_f$, the user queries all servers until all files are returned, so $\bm{D} = (\alpha, \ldots, \alpha)$. This results in $p_{(\alpha, \ldots, \alpha), j, k} = 1/n$ for all $j\in [m]$ and $k\in[\alpha]$ and $p_{\bd,j,k} = 0$ for other $(\bd,j,k)$, which gives $H(J,K\mid \bm{D}) = \log n$.
    \end{itemize}
\end{example}

Lemmas~\ref{lem:index_desired_file} and~\ref{lem:probability_distribution_constraints} provide constraints on $\{p_{\bd,j,k}\}$.

\begin{lemma}\label{lem:index_desired_file}
For any SRS DNA PIR scheme $\cS = \cS(m,\alpha)$, $j\in[m]$, and $k\in[\alpha]$, it holds that $\Pr(J=j,K=k) = \Pr(J=j)/\alpha$.
\end{lemma}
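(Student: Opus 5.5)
We prove this with a symmetry argument over the files within the server that stores $X$. Fix $j\in[m]$. The event $\{J=j\}$ says that the server storing $X$ is the $j$-th distinct server queried. The event $\{J=j,K=k\}$ adds that $X$ is the $k$-th distinct file obtained from that server. Since $X$ is uniform over $\cX$, conditioning on the server index $j_0$ with $X\in\cX_{j_0}$ makes $X$ uniform over the $\alpha$ files of $\cX_{j_0}$. The claim will follow once we show that, for each fixed $j_0$ and each fixed $k\in[\alpha]$,
\[\Pr(J=j,K=k\mid X\in\cX_{j_0}) = \tfrac{1}{\alpha}\Pr(J=j\mid X\in\cX_{j_0}).\]
Summing over $j_0$ with weights $1/m$ then gives the lemma.

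The key step is a coupling that permutes labels inside the server. Fix $j_0$ and $X=x\in\cX_{j_0}$. Every response from server $j_0$ is uniform over $\cX_{j_0}$, independently of everything else. Hence the order of first appearance of the files of $\cX_{j_0}$ is a uniformly random permutation, independent of the times at which new distinct files arrive. However, the scheme may adapt to the identities of the returned files, including whether $x$ itself has appeared. This adaptivity is the obstacle.

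To handle it, we use the fact that the event $\{J=j\}$ depends only on the user's decisions made before the first query to server $j_0$. The index $J$ is fixed at the moment server $j_0$ is first queried: $J=j$ exactly when $j_0$ is the $j$-th new server contacted. At that moment no file of $\cX_{j_0}$ has been observed. So $\{J=j\}$ is measurable with respect to the history up to the first query of $\cX_{j_0}$, and that history is independent of the future responses from $\cX_{j_0}$.

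Given that history, we compute the conditional probability that $K=k$. Here $K$ is the rank of $x$ in the first-appearance order of $\cX_{j_0}$, and we must also account for the scheme possibly stopping only after $x$ appears. Recovery guarantees that $x$ eventually appears, so $K$ is always well defined. Let $Z_{J,1},Z_{J,2},\ldots$ be the full sequence of distinct files the server would produce under indefinite sampling. By exchangeability, $x$ is equally likely to occupy each position.

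The scheme's stopping and switching decisions could correlate with this sequence. The cleanest route avoids that issue by realizing the responses from $\cX_{j_0}$ as an i.i.d. uniform sequence drawn in advance. The first-appearance order of this pre-drawn sequence is a uniform permutation $\pi$ of $\cX_{j_0}$, independent of the pre-first-query history. Because the scheme must eventually obtain $x$, the rank $K$ of $x$ among the distinct files actually obtained equals $\pi^{-1}(x)$, no matter when the scheme stops. Since $\pi$ is uniform and independent of $\{J=j\}$, we get $\Pr(K=k\mid J=j, X=x)=1/\alpha$ for every $k$. Multiplying by $\Pr(J=j\mid X=x)$ and averaging over $x$ yields $\Pr(J=j,K=k)=\Pr(J=j)/\alpha$.

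The main point to verify carefully is the claim that $K=\pi^{-1}(x)$. It requires that later queries to server $j_0$ consume the pre-drawn sequence in order, which holds because the model answers queries to a server sequentially. It also requires that truncation by stopping never reorders first appearances, which is true.
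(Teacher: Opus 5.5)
Your proof is correct and follows essentially the paper's route. You fix $X=x$, observe that $\{J=j\}$ is settled before the server holding $x$ is first queried and so is independent of its samples, and conclude that the rank of $x$ in the first-appearance order is uniform on $[\alpha]$. The differences are minor: you get uniformity from exchangeability of the first-appearance permutation rather than the paper's telescoping product, and your pre-drawn-sequence coupling with $K=\pi^{-1}(x)$ (valid because $x$ must be obtained) makes explicit why adaptive switching and stopping do not matter, which the paper only asserts.
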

\begin{IEEEproof}
Fix a desired file $X=x$, and let $B_j$ be the server containing $x$. As each sampling in a server is uniformly at random with replacement, conditioning on $J=j$ does not change the distribution of the samples of $B_j$. Therefore, the probability that $x$ is the $k$-th distinct file in the sequence is
\[\Pr(K = k|X = x,\hspace{-0.5pt} J = j)\hspace{-0.5pt} =\hspace{-0.5pt}\frac{1}{\alpha-k+1}\prod_{\ell=1}^{k-1}\frac{\alpha-\ell}{\alpha-\ell+1}\hspace{-0.5pt} =\hspace{-0.5pt} \frac{1}{\alpha},\]
where an empty product equals $1$. This also implies that $\Pr(K=k\mid J=j)$ is independent of $x$, and thus
\[\Pr(J=j, K=k) = \frac{\Pr(J=j)}{\alpha}.\]
\end{IEEEproof}

Let $a_j = \Pr(J=j)$. By Lemma~\ref{lem:index_desired_file}, we have $\sum_{\bd\in\cD_{m,\alpha}}p_{\bd, j, k} = a_j/\alpha$ for all $j\in[m]$ and $k\in[\alpha]$.

\begin{lemma}\label{lem:probability_distribution_constraints}
For any SRS DNA PIR scheme $\cS=\cS(m,\alpha)$ with $E_{\cS}\leq E$, the probabilities $\{a_j\}$, $\{p_{\bd,j,k}\}$ induced by $\cS$ satisfy
\begin{equation}\label{eq:download_cost_constraint}
\begin{aligned}
    &a_j\geq 0,~\sum_{j=1}^{m}a_j=1,~p_{\bd,j,k}\geq 0,~p_{\bd,j,k} = 0~\forall k>d_j,\\
    &\sum_{\bd\in\cD_{m,\alpha}}p_{\bd,j,k}=\frac{a_j}{\alpha},~\sum_{\bd\in\cD_{m,\alpha}}\sum_{j=1}^{m}\sum_{k=1}^{\alpha}p_{\bd,j,k}=1,\\
    &\sum_{\bd\in\cD_{m,\alpha}}p_{\bd}\sum_{i=1}^{m}g(d_i) \leq E,
\end{aligned}
\end{equation}
for all $\bd\in\cD_{m,\alpha}$, $j\in[m]$, and $k\in[\alpha]$.
\end{lemma}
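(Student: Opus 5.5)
The constraints in \eqref{eq:download_cost_constraint} will be checked one group at a time, each time reading it off the definitions of $J$, $K$, $\bm{D}$ and the $a_j$, or from an earlier result. The cost inequality in the last line is the only one that needs a real argument.

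\emph{Nonnegativity and normalization.} The conditions $a_j\ge 0$ and $p_{\bd,j,k}\ge 0$ hold because these are probabilities. Since $J\in[M]\subseteq[m]$ always holds, we have $\sum_j a_j=1$. The triple $(\bm{D},J,K)$ takes values in $\cD_{m,\alpha}\times[m]\times[\alpha]$, so its probabilities sum to $1$; this is the triple-sum constraint.

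\emph{Support constraint.} $K$ is the position of $X$ among the $D_J$ distinct files obtained from $B_J$, so $K\le D_J$ holds surely. Hence $p_{\bd,j,k}=0$ whenever $k>d_j$. The same reasoning shows $J\le M$, so $d_J\ge 1$, which is consistent with the definition of $\cD_{m,\alpha}$.

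\emph{Marginal constraint.} Summing $p_{\bd,j,k}$ over $\bd$ gives $\Pr(J=j,K=k)$. By Lemma~\ref{lem:index_desired_file}, this equals $a_j/\alpha$.

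\emph{Cost constraint.} The user queries only the servers $B_1,\dots,B_M$, and each query goes to exactly one of them. Hence $\tau=\sum_{i=1}^m |R^{B_i}|$, and so $E_\cS=\sum_i\bbE[|R^{B_i}|]$. By Lemma~\ref{lem:number_of_queries_each_server}, $\bbE[|R^{B_i}|]\ge \bbE[g(D_i)]$; only the first inequality of that lemma is used, not the Jensen step. Conditioning on $\bm{D}$ and using $g(0)=0$ for the unqueried coordinates gives
\[
\bbE[g(D_i)]=\sum_{\bd}p_{\bd}\,g(d_i).
\]
Summing over $i$ yields $\sum_{\bd}p_{\bd}\sum_i g(d_i)\le E_\cS\le E$.

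The main point to watch is the cost step. It relies on the optional-stopping argument already established in Lemma~\ref{lem:number_of_queries_each_server}, which needs $\bbE[\tau]<\infty$. If $E$ is finite, then $E_\cS\le E<\infty$ and the lemma applies. The remaining bookkeeping is to make sure the decomposition of $\tau$ counts every query exactly once.
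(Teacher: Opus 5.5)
Your proposal is correct and follows the paper's proof essentially step for step: nonnegativity, normalization and the support constraint $K\le D_J$ come from the definitions, the marginal constraint comes from Lemma~\ref{lem:index_desired_file}, and the cost constraint comes from summing the first inequality of Lemma~\ref{lem:number_of_queries_each_server} over the servers. The only cosmetic difference is that you obtain the triple-sum normalization directly, because $(\bm{D},J,K)$ is a random triple, whereas the paper obtains it by summing the marginal constraint $\sum_{\bd}p_{\bd,j,k}=a_j/\alpha$ over $j$ and $k$.
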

\begin{IEEEproof}
The nonnegativity of $a_j$ and $p_{\bd,j,k}$ follows from the definitions. Since $J\in[m]$, we also have $\sum_{j=1}^{m}a_j = 1$. As any DNA PIR scheme must ensure that the user can obtain the desired file, we have that $K\leq D_J$, which gives $p_{\bd, j, k}=0$ whenever $k>d_j$.

By Lemma~\ref{lem:index_desired_file}, we have
\[\sum_{\bd\in\cD_{m,\alpha}}p_{\bd,j,k} = \Pr(J=j,K=k) = \frac{a_j}{\alpha},\]
which also implies that
\[\sum_{\bd\in\cD_{m,\alpha}} \sum_{j=1}^{m}\sum_{k=1}^{\alpha}p_{\bd,j,k} = \sum_{j=1}^{m}\sum_{k=1}^{\alpha}\frac{a_j}{\alpha} = 1.\]

To prove the cost constraint, applying Lemma~\ref{lem:number_of_queries_each_server} to each server gives
\begin{align*}
E\geq E_{\cS} &= \sum_{j=1}^{m}\bbE\bigl[|R^{B_j}|\bigr] \geq \sum_{j=1}^{m}\bbE[g(D_j)] \\
&= \bbE\left[\sum_{j=1}^{m}g(D_j)\right] = \sum_{\bd\in\cD_{m,\alpha}}p_{\bd}\sum_{i=1}^{m}g(d_i).
\end{align*}
\end{IEEEproof}

We now formulate a lower bound on the privacy leakage in terms of the distribution $\{p_{\bd,j,k}\}$.

\begin{theorem}\label{thm:lower_bound_minimization}
For any $m$, $\alpha$, and $E\geq\alpha$, it holds that $L_{m,\alpha}(E)\geq\underline{L}_{m,\alpha}(E)$, where
\begin{equation}\label{eq:coverage_minimization}
\underline{L}_{m,\alpha}(E) = \min\Big\{\log n- \sum_{\bd\in\cD_{m,\alpha}}\phi(\bm{p}_{\bd})\Big\},
\end{equation}
and the minimum is over all $\{a_j\}$, $\{p_{\bd,j,k}\}$ satisfying~\eqref{eq:download_cost_constraint}.
\end{theorem}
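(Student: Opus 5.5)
The plan is to combine the entropy bound derived just before the statement with Lemma~\ref{lem:probability_distribution_constraints}.

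Fix $E\geq\alpha$ and any scheme $\cS=\cS(m,\alpha)$ with $(E_{\cS},L_{\cS})\in\cC(m,\alpha)$ and $E_{\cS}\leq E$. First, I will recall the chain already established: $H(X\mid R)=H(J,K\mid R)\leq H(J,K\mid\bm{D})=\sum_{\bd}\phi(\bm{p}_{\bd})$, where the inequality holds because $\bm{D}$ is a function of $R$ and conditioning reduces entropy. This gives $L_{\cS}=\log n-H(X\mid R)\geq \log n-\sum_{\bd}\phi(\bm{p}_{\bd})$, with $p_{\bd,j,k}$ induced by $\cS$. Second, by Lemma~\ref{lem:probability_distribution_constraints}, these induced $\{a_j\}$ and $\{p_{\bd,j,k}\}$ satisfy~\eqref{eq:download_cost_constraint} with the parameter $E$, because $E_{\cS}\leq E$. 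Hence they form a feasible point of the optimization in~\eqref{eq:coverage_minimization}, and so $L_{\cS}\geq\underline{L}_{m,\alpha}(E)$.

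Third, I will pass to the infimum. In the definition of $L_{m,\alpha}(E)$, the cost is exactly $E$, so the scheme certainly satisfies $E_{\cS}\leq E$. Taking the infimum over such schemes therefore yields $L_{m,\alpha}(E)\geq\underline{L}_{m,\alpha}(E)$.

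The main obstacle is justifying that the ``min'' exists, and that the feasible set is nonempty so the bound is not vacuous. For existence: when restricted to finitely many $\bd$ ($\cD_{m,\alpha}$ is finite), the feasible set is a closed, bounded polytope. It is closed because the constraints are non-strict linear inequalities and equalities in finitely many variables, and bounded because all entries lie in $[0,1]$. The objective is continuous, since $\phi$ is continuous with $0\log0=0$. Compactness then guarantees that the minimum is attained. For nonemptiness: when $E\geq\alpha$, the scheme $\cS_\ell$ of Example~\ref{exa:SRS_DNA_PIR_scheme} has cost $\alpha\leq E$, so its induced distribution is feasible; alternatively, one can write it down directly.

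If one prefers to avoid the existence question altogether, the same argument shows the bound with the infimum in place of the minimum. That weaker form is all the inequality actually needs.
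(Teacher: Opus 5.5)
Your proof is correct and follows the same route as the paper. You combine the chain $H(X\mid R)=H(J,K\mid R)\le H(J,K\mid\bm{D})=\sum_{\bd}\phi(\bm{p}_{\bd})$ with the feasibility of the induced probabilities from Lemma~\ref{lem:probability_distribution_constraints}. Your additional compactness argument (finite $\cD_{m,\alpha}$, a closed bounded polytope, a continuous objective, and nonemptiness via $\cS_\ell$) correctly shows that the minimum is attained, a point the paper leaves implicit.
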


\begin{IEEEproof}
For any SRS DNA PIR scheme $\cS = \cS(m,\alpha)$ with $E_\cS = E$, its induced probabilities $\{a_j\}$ and $\{p_{\bd,j,k}\}$ are feasible for \eqref{eq:download_cost_constraint} by Lemma~\ref{lem:probability_distribution_constraints}. Therefore, we have
\begin{align*}
    L_{\cS} &=  H(X)-H(X\mid R) \ge \log n-H(J,K\mid \bm{D})\\
            & = \log n -\sum_{\bd\in\cD_{m,\alpha}}\phi(\bm{p}_{\bd}) \ge \underline{L}_{m,\alpha}(E).
\end{align*}
\end{IEEEproof}

In Section~\ref{subsec:optimal_construction_alpha_2}, we show that the lower bound in Theorem~\ref{thm:lower_bound_minimization} is tight for $\alpha = 2$. However, the number of variables in \eqref{eq:coverage_minimization} is of order $m\alpha^{m+1}$, which makes the optimization problem hard to solve for large values of $m$ and $\alpha$.

\section{Constructions of SRS DNA PIR Schemes}\label{sec:constructions}
In this section, we provide two constructions of SRS DNA PIR schemes. The first one shows that the lower bound in Theorem~\ref{thm:lower_bound_minimization} is tight for $\alpha = 2$, and the second one works for any integers $m$ and $\alpha$.

\subsection{Optimal Scheme for Servers of Size Two}\label{subsec:optimal_construction_alpha_2}

Let $\alpha = 2$. For any $\{a_j\}$ and $\{p_{\bd,j,k}\}$ satisfying \eqref{eq:download_cost_constraint}, we construct an SRS DNA PIR scheme whose privacy leakage equals the objective value in \eqref{eq:coverage_minimization}.

\begin{construction}\label{cons:optimal_scheme_alpha_2}
Let $\alpha=2$, and let $\{a_j\}$ and $\{p_{\bd,j,k}\}$ satisfy~\eqref{eq:download_cost_constraint}. For $X\in\cX_{j_0}$, the scheme works as follows.
\begin{enumerate}
    \item Sample $J=j$ with probability $a_j$. Then choose a permutation $\sigma=(\sigma(1),\ldots,\sigma(m))$ of $[m]$ uniformly at random among those with $\sigma(j)=j_0$. The $i$-th queried server $B_i$ is server $\sigma(i)$. In particular, $B_j$ is server $j_0$, which stores $X$.
    \item Query each of the servers $B_1,\ldots,B_j$ once, in this order. If $B_j$ returns $X$, set $k=1$; otherwise, set $k=2$.
    \item Draw $\bd\in\cD_{m,2}$ with probability $2p_{\bd,j,k}/a_j$.
    \item Let $\bd=(d_1,\ldots,d_s,0,\ldots,0)$ be the drawn vector. Query each of the servers $B_{j+1},\ldots,B_s$ once, in this order. Then, for $i=1,\ldots, s$ in increasing order, if $d_i=2$, query $B_i$ until its second file is returned.
\end{enumerate}
\end{construction}
\begin{example}
    Let $m = 2$ and $\alpha = 2$, so that server $1$ stores $x_1,x_2$ and server $2$ stores $x_3,x_4$. Let $a_1 = a_2 = 1/2$, $p_{(2, 1), 1, 1} = p_{(1, 2), 1, 1} = p_{(2, 1), 2, 1} = p_{(1, 2), 2, 1} = 1/8$, and $p_{(2, 1), 1, 2} = p_{(1, 2), 2, 2} = 1/4$. These probabilities satisfy \eqref{eq:download_cost_constraint} with $E=4$.
    Assume $X = x_1\in\cX_{1}$. The scheme in Construction~\ref{cons:optimal_scheme_alpha_2} works as follows:
    \begin{enumerate}
        \item The user samples $J = 2$, with probability $a_2=1/2$. Then $\sigma = (2, 1)$, since $\sigma(2)=j_0=1$. Thus, $B_1$ is server $2$ and $B_2$ is server $1$.
        \item The user queries $B_1$ and receives $x_3$, and then queries $B_2$ and receives $x_1$. Since $X = x_1$, set $k=1$.
        \item Since $2p_{(2, 1),2,1}/a_2 = 2p_{(1, 2),2,1}/a_2 = 1/2$, the user draws $\bd = (2, 1)$ or $\bd=(1,2)$ with probability $1/2$ each. Assume $\bd = (2,1)$.
        \item Since $d_1 = 2$, the user queries $B_1$ until $x_4$ is returned, and then stops.
    \end{enumerate}
    If $B_2$ returns $x_2$ in Step~2, then $k=2$, and the user draws $\bd=(1,2)$ with probability $2p_{(1,2),2,2}/a_2=1$. In Step~4, the user then queries $B_2$ until $x_1$ is returned.
\end{example}

\begin{theorem}
\label{lem:performance_optimal_scheme_alpha_2}
The scheme $\cS$ in Construction~\ref{cons:optimal_scheme_alpha_2} is well-defined, and its download cost and privacy leakage are
\[E_{\cS} = \sum_{\bd\in\cD_{m,2}}p_{\bd}\sum_{i=1}^{m}g(d_i)~\text{and}~L_{\cS}=\log n-\sum_{\bd\in\cD_{m,2}}\phi(\bm{p}_{\bd}).\]
\end{theorem}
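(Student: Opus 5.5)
The plan is to check three things in turn: that Construction~\ref{cons:optimal_scheme_alpha_2} is well-defined, what its expected cost is, and what the joint law of $(\bm{D},J,K)$ and of $R$ is. The leakage is then $\log n - H(X\mid R)$, and we will show $H(X\mid R)=H(J,K\mid \bm{D})=\sum_{\bd}\phi(\bm{p}_{\bd})$.

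\emph{Well-definedness.} Since $\sum_j a_j=1$, Step~1 is a valid draw. In Step~2, the first query to $B_j$ returns $X$ with probability $1/2$, so $k\in\{1,2\}$. By~\eqref{eq:download_cost_constraint}, $\sum_{\bd}2p_{\bd,j,k}/a_j=1$ for each $k$, so Step~3 is a valid distribution; if $a_j=0$, then $J=j$ is never sampled. Every $\bd$ drawn with positive probability satisfies $d_j\ge k$. Hence $d_j\ge 1$, so $s\ge j$ and the servers $B_1,\dots,B_j$ already queried are among $B_1,\dots,B_s$. If $k=2$, then $d_j=2$ and Step~4 returns $X$ as the second file of $B_j$. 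So $X\in R$.

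\emph{Distribution.} Each $B_i$, $i\le s$, is queried at least once. It is queried again until its second distinct file appears exactly when $d_i=2$. Hence the realized $\bm{D}$ equals the drawn $\bd$, and the index of $X$ among the distinct files of $B_J$ is exactly $k$, so $K=k$. Therefore
\[\Pr(\bm{D}=\bd,J=j,K=k)=a_j\cdot\tfrac12\cdot\tfrac{2p_{\bd,j,k}}{a_j}=p_{\bd,j,k}.\]
This uses that the first file from server $j_0$ is $X$ with probability $1/2$, independently of $\sigma$ and $J$. For the cost, a server with $d_i=1$ costs one query, which equals $g(1)=2(H_2-H_1)=1$. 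A server with $d_i=2$ costs $1+2=3=g(2)$ in expectation, since after the first query the second distinct file takes a geometric number of queries with mean $2$. Summing over servers and averaging over $\bd$ gives $E_\cS=\sum_{\bd}p_{\bd}\sum_i g(d_i)$.

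\emph{Leakage (main step).} We need $H(X\mid R)=H(J,K\mid\bm{D})$, i.e.\ $(J,K)$ is conditionally independent of the rest of $R$ given $\bm{D}$. The idea is to show that, conditioned on $(\bm{D},J,K)=(\bd,j,k)$ and on $X=x$, the response $R$ is uniform over all sequences consistent with the pattern $\bd$, modulo the choice of $x$. Concretely, fix $\bd$. Then $R$ is determined by three things: the ordered list of servers $B_1,\dots,B_s$, the first file of each $B_i$, and the geometric repetition lengths for servers with $d_i=2$. The repetition lengths have a law depending only on $\bd$. Given $X=x\in\cX_{j_0}$ and $(J,K)=(j,k)$, the ordered servers are uniform among injections with $B_j=j_0$. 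The first files of $B_i$, $i\neq j$, are uniform. The first file of $B_j$ is $x$ if $k=1$ and its partner if $k=2$. Averaging over $x$ uniform in $\cX$ makes the ordered server list and all first files uniform, jointly with $(j,k)$ independent. Thus $\Pr(R=\br\mid \bd,j,k)$ is the same for all $(j,k)$ with $p_{\bd,j,k}>0$ consistent with $\br$. The posterior of $(J,K)$ given $R=\br$ with $\bm{D}(\br)=\bd$ is then proportional to $p_{\bd,j,k}$, and $X$ is a bijective function of $(J,K)$ given $R$. So
\[H(X\mid R)=\sum_{\bd}p_{\bd}H(J,K\mid \bm{D}=\bd)=\sum_{\bd}\phi(\bm{p}_{\bd}),\]
which yields $L_\cS$. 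The delicate part is the symmetry bookkeeping: verifying that the counts of $(x,\sigma,\text{first files})$ producing a given $\br$ do not depend on $(j,k)$. Each fixed $\br$ and $(j,k)$ pins down $x$ and $\sigma$ restricted to $[s]$ uniquely. The number of extensions of $\sigma$ to the unqueried servers is $(m-s)!$ regardless, and $\Pr(X=x)=1/n$. I would carry out this count explicitly.
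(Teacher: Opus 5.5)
Your proposal is correct and follows essentially the same route as the paper. You use the same normalization check for Step 3, the same identity $\Pr(\bm{D}=\bd,J=j,K=k)=a_j\cdot\frac12\cdot\frac{2p_{\bd,j,k}}{a_j}$, and the same per-server costs $g(1)=1$, $g(2)=3$. For the leakage, your conditional-independence argument comes down to the paper's explicit count $\Pr(R=\br\mid X=z_{j,k})=2^{1-\tau}\frac{(m-s)!}{(m-1)!}p_{\bd,j,k}$, whose prefactor does not depend on $(j,k)$, followed by Bayes' rule. Note only that $R$ is not literally uniform given $(\bd,j,k,x)$, since each sequence carries weight $2^{-\tau}$; your refined description via server order, first files, and geometric repetition lengths already handles this correctly.
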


\begin{IEEEproof}
By the constraints~\eqref{eq:download_cost_constraint}, for any $a_j>0$, we have $\sum_{\bd\in\cD_{m,2}}\frac{2p_{\bd,j,k}}{a_j}=1$. Hence, the distribution used in Step~3 is normalized. Furthermore, any vector selected with positive probability satisfies $d_j\geq k$, which ensures that the desired file is obtained. Therefore, the scheme is well-defined.

Next we claim that the prescribed distribution is realized. For $a_j>0$, it holds that
\begin{align*}
    &\Pr(\bm{D} = \bd, J = j, K = k) \\
    = &\Pr(J = j)\hspace{-0.37pt}\Pr(K = k\mid J = j)\hspace{-0.37pt}\Pr(\bm{D} = \bd\mid J = j,K = k)\\
    = &a_j\cdot\frac{1}{2}\cdot\frac{2p_{\bd,j,k}}{a_j} \\
    = &p_{\bd, j, k}.
\end{align*}
If $a_j=0$, both sides are zero by the constraints, which proves the claim.

For the download cost, note that when $\alpha=2$, we have $g(0)=0$, $g(1)=1$, and $g(2)=3$. Fix a terminal coverage vector $\bd=(d_1,\ldots,d_s,0,\ldots,0)$ with positive probability. The first pass uses exactly $s$ queries, one for each visited server. For every position $i$ with $d_i=2$, the expected number of queries needed to obtain the second distinct file is $2$. Consequently, we have 
\[\bbE[\tau\mid\bm{D}=\bd] = s + 2|\{i:d_i=2\}| = \sum_{i=1}^{m}g(d_i).\] Averaging over $\bm{D}$ gives
\[ E_{\cS} = \sum_{\bd\in\cD_{m,2}}p_{\bd}\sum_{i=1}^{m}g(d_i). \]

Finally, we compute the privacy leakage. Fix a response sequence $\br=(r_1,\ldots,r_\tau)$ with positive probability, and let $\bd=(d_1, \ldots, d_s, 0, \ldots, 0)$ be the corresponding terminal coverage vector. For $j\in[s]$ and $k\in[d_j]$, 
let $z_{j, k}$ be the $k$-th distinct file returned by the $j$-th queried server in $\br$.
Consider the probability of $R = \br$ when $X=z_{j,k}$ and $a_j>0$. The event $R = \br$ happens if and only if the following occur in order:
\begin{itemize}
    \item The user selects $J = j$ and $\sigma(i)$ is the $i$-th queried server in $\br$. The probability of this event is $a_j \frac{(m-s)!}{(m-1)!}$, since the first $s$ positions of the permutation are fixed by the response sequence.
    \item The user selects $\bm{D} = \bd$. The probability of this event is $2p_{\bd, j, k}/a_j$.
    \item The user receives the prescribed responses from the servers. The probability of this event is $2^{-\tau}$, as the size of each server is $2$.
\end{itemize}
In total, we have
\begin{align*}
    \Pr(R=\br\mid X=z_{j,k})&= a_j \frac{(m-s)!}{(m-1)!} \frac{2p_{\bd,j,k}}{a_j} 2^{-\tau}\\ 
    &= 2^{1-\tau}\frac{(m-s)!}{(m-1)!}p_{\bd,j,k}.
\end{align*} 
Here, the factor $2^{1-\tau}\frac{(m-s)!}{(m-1)!}$ 
does not depend on $(j,k)$. Since $X$ is uniform over the $n$ files, it follows that
\begin{align*}
    \Pr(R=\br) &= \frac{1}{n}2^{1-\tau}\frac{(m-s)!}{(m-1)!}\sum_{u=1}^{s}\sum_{v=1}^{d_u}p_{\bd,u,v}\\
    &=\frac{p_{\bd}}{n}2^{1-\tau}\frac{(m-s)!}{(m-1)!}.
\end{align*}
By Bayes' rule, we have
\[\Pr(X=z_{j,k}\mid R=\br) = \frac{2^{1-\tau}\frac{(m-s)!}{(m-1)!}\frac{p_{\bd,j,k}}{n}}{2^{1-\tau}\frac{(m-s)!}{(m-1)!}\frac{p_{\bd}}{n}}=\frac{p_{\bd,j,k}}{p_{\bd}},\]
which implies that
\[H(X\mid R=\br) = H(J,K\mid R=\br) = \frac{\phi(\bm{p}_{\bd})}{p_{\bd}}.\]
Averaging over the response sequences yields that
\[H(X\mid R) = \sum_{\substack{\bd\in\cD_{m,2}\\p_{\bd}>0}}p_{\bd}\frac{\phi(\bm{p}_{\bd})}{p_{\bd}}=\sum_{\bd\in\cD_{m,2}}\phi(\bm{p}_{\bd}),\]
which gives the privacy leakage
\[ L_{\cS} = H(X)-H(X\mid R) = \log n- \sum_{\bd\in\cD_{m,2}}\phi(\bm{p}_{\bd}).\]
\end{IEEEproof}

\begin{remark}
The key property used in Construction~\ref{cons:optimal_scheme_alpha_2} is that a single query to the server holding the desired file suffices to determine $K$. If the desired file is returned, then $K=1$; otherwise, $K=2$. 
For $\alpha>2$, this property does not hold: after the first query, the user only learns whether $K = 1$ or $K\in [2,\alpha]$.
\end{remark}

\subsection{Construction for Servers of Any Size}

Construction~\ref{cons:optimal_scheme_alpha_2} works only for $\alpha = 2$, and its optimal probabilities $\{p_{\bd,j,k}\}$ are obtained from the optimization problem in~\eqref{eq:coverage_minimization}, which we solve numerically in Section~\ref{sec:numerical_comparison}. In this subsection, we construct SRS DNA PIR schemes for any integers $m$ and $\alpha$ from a single-server SRS DNA PIR scheme.

\begin{construction}[$s$-subset scheme]\label{cons:multiple_from_single}
    Let $s\in [m]$, and let $\cS = \cS(1,\alpha)$ be a single-server SRS DNA PIR scheme. For the desired file $X\in\cX_{j_0}$, the scheme works as follows.
    \begin{enumerate}
        \item Choose a subset $T\subseteq [m]$ of size $s$ uniformly at random among all sets of size $s$ that contain $j_0$.
        \item For each $j\in T$, in increasing order of $j$, apply the scheme $\cS$ to server $j$. The desired file is $X$ if $j=j_0$, and a file of server $j$ chosen uniformly at random otherwise.
    \end{enumerate}
\end{construction}

The order in Step~2 must not depend on $j_0$. Otherwise, the order reveals information about the server that stores $X$. Lemma~\ref{lem:construction_multiple_servers_based_on_single_server} gives the download cost and the privacy leakage of the $s$-subset scheme of Construction~\ref{cons:multiple_from_single}.

\begin{lemma}\label{lem:construction_multiple_servers_based_on_single_server}
    If $\cS=\cS(1, \alpha)$ satisfies $E_\cS = E$ and $L_\cS = L$, then for every $s\in[m]$ the $s$-subset scheme $\cS'=\cS'(m, \alpha)$ of Construction~\ref{cons:multiple_from_single} satisfies $E_{\cS'} = sE$ and $L_{\cS'} = L+\log \frac{m}{s}$.
\end{lemma}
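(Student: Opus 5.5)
The download cost is the easy part. Given the random set $T$, the $s$-subset scheme runs $\cS$ independently on each of the $s$ servers in $T$. Each run is an instance of a single-server scheme on a server of size $\alpha$ whose desired file is uniform over that server. For $j=j_0$ this holds because $X$ is uniform over $\cX$, and so uniform over $\cX_{j_0}$ given $j_0$. For $j\neq j_0$ it holds by construction. Each run therefore has expected length $E$. Linearity of expectation gives $E_{\cS'}=sE$, whatever the distribution of $T$.

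For the leakage I will use the chain rule. Write $R=(T,R_{j})_{j\in T}$, where $R_j$ is the response sequence from server $j$. The set $T$ is visible to the colluding servers, because the queries are made in increasing order of $j$ and every server in $T$ is queried at least once. Let $J_0$ be the index of the server storing $X$ and $X'$ the position of $X$ inside $\cX_{J_0}$. Then $H(X\mid R)=H(J_0\mid R)+H(X'\mid J_0,R)$.

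The key step is to show that, conditioned on $T$, the tuple $(R_j)_{j\in T}$ is exchangeable in the following sense. Given $T$ and $J_0=j_0\in T$, the runs are independent. The run on $j_0$ has the law of $\cS$'s response with a uniform desired file, and so do the other runs. Hence the joint law of $(R_j)_{j\in T}$ given $T$ does not depend on $j_0$ (after identifying files of each server with $[\alpha]$). Also $\Pr(J_0=j_0\mid T)=1/s$ for $j_0\in T$, since $T$ is uniform among $s$-sets containing $j_0$ and $J_0$ is uniform.

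From this, $\Pr(J_0=j\mid T,(R_i)) \propto P_{\cS}(R_j)\,\Pr(\text{desired}=x'\mid R_j)$ summed appropriately. I will do the computation directly rather than via the chain rule. We have $\Pr(X=x,R=r)=\frac{1}{n}\binom{m-1}{s-1}^{-1}\Pr_{\cS}(R_{j_0}\mid x)\prod_{j\in T\setminus j_0}\Pr_{\cS}(R_j)$, where $\Pr_{\cS}(R_j)$ is the marginal under a uniform desired file. Dividing by the marginal, $\Pr(X=x\mid R)=\frac{1}{s}\Pr_{\cS}(x\mid R_{j_0})$. Then $H(X\mid R)=\log s+\frac{1}{s}\sum_{j\in T}\bbE[H_{\cS}(X_j\mid R_j)]=\log s+\log\alpha-L$, because each $R_j$ is distributed as a response of $\cS$. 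Subtracting from $\log n=\log m+\log\alpha$ gives $L_{\cS'}=L+\log\frac{m}{s}$.

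The main obstacle is checking that the marginal of $R$ factorizes with the same constant for each candidate $x$. This requires that the query order and the choice of $T$ are independent of $j_0$, given that $j_0\in T$, which is exactly why the increasing-order rule matters. I will verify this factorization carefully first.
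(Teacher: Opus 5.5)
Your proof is correct and is essentially the paper's argument. The posterior $\Pr(X=x\mid R)=\frac{1}{s}\Pr_{\cS}(x\mid R_{j_0})$ encodes exactly the two facts the paper feeds into $H(X\mid R)=H(j_0\mid R)+H(X\mid j_0,R)$: that $j_0$ is uniform on $T$ given $R$, and that $H(X\mid j_0,R)=\log\alpha-L$. Your Bayes computation, together with the observation that given $T$ the blocks $R_j$ are independent and each distributed as a response of $\cS$, makes those two steps rigorous. The discarded aside ``$\propto \Pr_{\cS}(R_j)\Pr(\cdot\mid R_j)$'' has a spurious $j$-dependent factor as written; the correct joint weight is $\Pr_{\cS}(x'\mid R_j)\prod_{i\in T}\Pr_{\cS}(R_i)$. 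Your final computation does not rely on it, so nothing breaks.
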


\begin{IEEEproof}
    The scheme $\cS$ is applied to $s$ servers, so $E_{\cS'} = sE$. For the privacy leakage, note that $R$ reveals the set $T$. Since $T$ is uniform among the sets of size $s$ that contain $j_0$, and $\cS$ is applied to each server in $T$ with a uniformly distributed desired file, the index $j_0$ is uniformly distributed over $T$ given $R$. Thus, $H(j_0\mid R)=\log s$. Given $j_0$, the file $X$ depends on $R$ only through the response of server $j_0$, so $H(X\mid j_0,R) = \log\alpha - L$. Therefore, $H(X\mid R) = \log s+\log\alpha-L$, and $L_{\cS'} = \log (m\alpha) - H(X\mid R) = L+\log\frac{m}{s}$.
\end{IEEEproof}

\begin{example}
    Let $m = \alpha$, and let $\cS_{\ell} = \cS_{\ell}(1, \alpha)$ and $\cS_{f} = \cS_{f}(1, \alpha)$ be the schemes of Example~\ref{exa:SRS_DNA_PIR_scheme} for a single server. Thus, $E_{\cS_\ell} = \alpha$, $L_{\cS_\ell}= \log \alpha$, $E_{\cS_f} = \alpha H_\alpha$, and $L_{\cS_f} = 0$. By Lemma~\ref{lem:construction_multiple_servers_based_on_single_server}, Construction~\ref{cons:multiple_from_single} gives the following two schemes for $m$ servers.
    \begin{itemize}
        \item With $s = 1$ and $\cS=\cS_f$, the user downloads all files of the server that stores $X$. The download cost is $\alpha H_\alpha$, and the privacy leakage is $0+\log m = \log \alpha$.
        \item With $s = m$ and $\cS=\cS_{\ell}$, the user applies $\cS_\ell$ to every server. The download cost is $m\alpha = \alpha^2$, and the privacy leakage is $\log \alpha+\log\frac{m}{m} = \log \alpha$.
    \end{itemize}
    The two schemes have the same privacy leakage, but the first one has a smaller download cost, since $H_\alpha<\alpha$ for $\alpha\geq 2$.
\end{example}


Corollary~\ref{cor:construction_multiple_servers_based_on_single_server} follows from Lemma~\ref{lem:construction_multiple_servers_based_on_single_server} and Theorem~\ref{thm:randomized_DNA_PIR_single}.
\begin{corollary}\label{cor:construction_multiple_servers_based_on_single_server}
    For any $\alpha$, $m$, and $s\in[m]$, and for each $r\in[\alpha]$, let $\{p_i^r\}_{i=0}^{\alpha-r}$ be a probability distribution. Then there exists an SRS DNA PIR scheme $\cS'_{\mathcal{R}} = \cS'_{\mathcal{R}}(m, \alpha)$ with
    \begin{align*}
        E_{\cS'_{\mathcal{R}}} &= s\Big(\alpha + \sum_{r = 1}^{\alpha}\sum_{i = 0}^{\alpha - r}p_i^r \left(H_{\alpha-r} - H_{\alpha-r-i}\right)\Big),\\
        L_{\cS'_{\mathcal{R}}} &= \log \frac{m\alpha}{s} + \frac{1}{\alpha}\sum_{r = 1}^{\alpha}\sum_{j = 1}^{r}p_{r - j}^j(\log p_{r - j}^j - \log \sum_{\ell = 1}^{r} p_{r - \ell}^{\ell}).
    \end{align*}
\end{corollary}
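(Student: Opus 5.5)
The corollary follows by composing two earlier results. Theorem~\ref{thm:randomized_DNA_PIR_single} supplies a single-server scheme, and Lemma~\ref{lem:construction_multiple_servers_based_on_single_server} lifts it to $m$ servers.

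Fix $\alpha$, $m$, $s\in[m]$, and the distributions $\{p_i^r\}_{i=0}^{\alpha-r}$ for $r\in[\alpha]$. By Theorem~\ref{thm:randomized_DNA_PIR_single}, there is a single-server scheme $\cS_{\mathcal{R}}=\cS_{\mathcal{R}}(1,\alpha)$ with
\[E:=E_{\cS_{\mathcal{R}}} = \alpha + \sum_{r = 1}^{\alpha}\sum_{i = 0}^{\alpha - r}p_i^r \left(H_{\alpha-r} - H_{\alpha-r-i}\right).\]
Its leakage $L:=L_{\cS_{\mathcal{R}}}$ is the expression $\log\alpha + \frac{1}{\alpha}\sum_{r}\sum_{j}p^j_{r-j}(\cdots)$ stated there.

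Next, let $\cS'_{\mathcal{R}}$ be the $s$-subset scheme of Construction~\ref{cons:multiple_from_single}, built with $\cS=\cS_{\mathcal{R}}$. By Lemma~\ref{lem:construction_multiple_servers_based_on_single_server}, $E_{\cS'_{\mathcal{R}}}=sE$, which is exactly the claimed download cost. The same lemma gives $L_{\cS'_{\mathcal{R}}}=L+\log\frac{m}{s}$. Combining the logarithms, $\log\alpha+\log\frac{m}{s}=\log\frac{m\alpha}{s}$, so this matches the claimed leakage formula term by term.

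One hypothesis of the lemma deserves a check. Its proof requires the single-server scheme to be run with a uniformly distributed desired file on each server of $T$. Construction~\ref{cons:multiple_from_single} enforces this by design, and Theorem~\ref{thm:randomized_DNA_PIR_single} is stated under the uniform-$X$ model, so its $(E,L)$ values are the ones the lemma uses. I expect no real obstacle: the argument is direct substitution plus the identity $\log\alpha+\log(m/s)=\log(m\alpha/s)$.
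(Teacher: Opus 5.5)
Your proposal is correct and follows the paper's own route: instantiate Construction~\ref{cons:multiple_from_single} with the single-server scheme of Theorem~\ref{thm:randomized_DNA_PIR_single}, then apply Lemma~\ref{lem:construction_multiple_servers_based_on_single_server} and use $\log\alpha+\log\frac{m}{s}=\log\frac{m\alpha}{s}$. Your check that the construction runs the single-server scheme with a uniformly distributed target on every server in $T$ is the right hypothesis to confirm, and it holds.
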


As in~\cite[Theorem~4]{wang2026sequential}, two schemes can be mixed. The download cost of the resulting scheme is the convex combination of the two download costs, and its privacy leakage is at most the convex combination of the two privacy leakages.
\begin{lemma}\label{lem:mixing_schemes}
    Let $\cS_1 = \cS_1(m, \alpha)$ and $\cS_2 = \cS_2(m, \alpha)$ be two SRS DNA PIR schemes with $(E_{\cS_1}, L_{\cS_1})=(E_1, L_1)$ and $(E_{\cS_2}, L_{\cS_2})=(E_2, L_2)$. For any $\lambda\in [0, 1]$, let $\cS_{\lambda}$ be the scheme in which the user runs $\cS_1$ with probability $\lambda$ and $\cS_2$ otherwise. Then it follows that
    \[E_{\cS_{\lambda}}=\lambda E_1+(1-\lambda)E_2~\text{and}~L_{\cS_{\lambda}}\le\lambda L_1+(1-\lambda)L_2.\]
\end{lemma}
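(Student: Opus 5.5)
The plan is to introduce an auxiliary selector variable and reduce both claims to linearity of expectation and a standard entropy inequality. Let $U\in\{1,2\}$ be the private coin that decides which scheme is run: $\Pr(U=1)=\lambda$, and $U$ is independent of the desired file $X$. Conditioned on $U=u$, the scheme $\cS_\lambda$ behaves exactly as $\cS_u$. Therefore the joint law of $(X,\tau,R)$ given $U=u$ equals the law induced by $\cS_u$.

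For the download cost, I apply the law of total expectation:
\[
E_{\cS_\lambda}=\bbE[\tau]=\lambda\,\bbE[\tau\mid U=1]+(1-\lambda)\,\bbE[\tau\mid U=2]=\lambda E_1+(1-\lambda)E_2 .
\]
Recovery is inherited from the component schemes, since each run of $\cS_u$ ends with $X\in R$.

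For the leakage, I first note that the servers observe only $R$ and not $U$. The chain rule gives
\[
I(X;R)\le I(X;R,U)=I(X;U)+I(X;R\mid U).
\]
The first term vanishes because $U$ is independent of $X$. The second term equals $\lambda I(X;R\mid U=1)+(1-\lambda)I(X;R\mid U=2)$. For each $u$, the conditional law of $X$ given $U=u$ is still uniform, and the channel from $X$ to $R$ is that of $\cS_u$. Hence $I(X;R\mid U=u)=L_u$. Combining these gives $L_{\cS_\lambda}\le\lambda L_1+(1-\lambda)L_2$.

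The only point needing care is the identification $I(X;R\mid U=u)=L_{\cS_u}$. It depends on $X$ remaining uniform after conditioning on $U$, which holds because $U$ is drawn independently of $X$. It also depends on the response sequences being mixed without any tag revealing $U$. The resulting inequality can be strict, because the same response sequence may arise under both schemes. It is exactly this lack of a tag that makes the bound an inequality rather than an equality.
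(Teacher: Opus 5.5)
Your proposal is correct and follows essentially the same route as the paper. The paper also takes the cost from linearity of expectation and bounds the leakage via $H(X\mid R)\ge H(X\mid R,C)$ together with the independence of $C$ and $X$; this is the same step as your chain-rule bound $I(X;R)\le I(X;R,U)=I(X;R\mid U)$.
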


\begin{IEEEproof}
    Let $C\in\{1,2\}$ be the choice of the SRS DNA PIR schemes of the user. The cost follows from linearity of expectation. Since conditioning reduces entropy and $C$ is independent of $X$, we have
    \begin{align*}
        L_{\cS_{\lambda}} &= H(X)-H(X\mid R)\\
        &\leq H(X)-H(X\mid R,C)\\
        &=\lambda I(X;R\mid C=1)+(1-\lambda)I(X;R\mid C=2)\\
        &=\lambda L_1+(1-\lambda)L_2,
    \end{align*}
    where the last equality holds because, given $C=c$, the user runs exactly $\cS_c$. 
\end{IEEEproof}

Corollary~\ref{cor:bridging_s} follows directly from Lemma~\ref{lem:construction_multiple_servers_based_on_single_server} and Lemma~\ref{lem:mixing_schemes}.

\begin{corollary}\label{cor:bridging_s}
    Let $\cS = \cS(1,\alpha)$ be an SRS DNA PIR scheme with $E_\cS = E$ and $L_\cS = L$. For any $s \in [m-1]$ and $E^* \in [sE, (s+1)E]$, let $\lambda = (s+1) - \frac{E^*}{E}$. Then there exists an SRS DNA PIR scheme $\cS^* = \cS^*(m, \alpha)$ with $E_{\cS^*} = E^*$ and $L_{\cS^*} \le \log \frac{m2^L}{s^{\lambda}(s+1)^{1-\lambda}}$.
\end{corollary}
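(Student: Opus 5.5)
The plan is to mix two $s$-subset schemes built from the same single-server scheme $\cS$. By Lemma~\ref{lem:construction_multiple_servers_based_on_single_server}, Construction~\ref{cons:multiple_from_single} applied with $\cS$ and $s$ gives a scheme $\cS_1=\cS_1(m,\alpha)$ with $(E_{\cS_1},L_{\cS_1})=(sE,\,L+\log\frac{m}{s})$. Applied with $s+1\le m$, it gives a scheme $\cS_2$ with $(E_{\cS_2},L_{\cS_2})=((s+1)E,\,L+\log\frac{m}{s+1})$. The hypothesis $s\in[m-1]$ is used exactly here, to guarantee that $s+1\in[m]$ is a legal subset size.

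Next, I check the choice of $\lambda$. Since $E^*\in[sE,(s+1)E]$ and $E>0$ (indeed $E\ge\alpha\ge 1$ by Claim~\ref{cla:lower_bound_1}), we have $E^*/E\in[s,s+1]$, so $\lambda=(s+1)-E^*/E\in[0,1]$. Let $\cS^*=\cS_\lambda$ be the mixture of Lemma~\ref{lem:mixing_schemes}, which runs $\cS_1$ with probability $\lambda$ and $\cS_2$ otherwise. That lemma gives
\[E_{\cS^*}=\lambda sE+(1-\lambda)(s+1)E=(s+1)E-\lambda E=E^*.\]

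For the leakage, the same lemma yields
\begin{align*}
L_{\cS^*}&\le \lambda\Big(L+\log\frac{m}{s}\Big)+(1-\lambda)\Big(L+\log\frac{m}{s+1}\Big)\\
&=L+\log m-\lambda\log s-(1-\lambda)\log(s+1)\\
&=\log\frac{m2^L}{s^{\lambda}(s+1)^{1-\lambda}},
\end{align*}
which is the claimed bound.

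There is no real obstacle. The only points needing care are that $\lambda$ lies in $[0,1]$ and that $s+1\le m$, so that both component schemes exist. One should also note that Lemma~\ref{lem:mixing_schemes} applies because both schemes are $(m,\alpha)$ schemes on the same partitioned database.
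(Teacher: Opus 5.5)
Your proposal is correct and follows essentially the same route as the paper, which states that the corollary follows directly from Lemma~\ref{lem:construction_multiple_servers_based_on_single_server} and Lemma~\ref{lem:mixing_schemes}. That is exactly your mixture of the $s$-subset and $(s+1)$-subset schemes with weight $\lambda$. Your cost identity, your leakage computation, and your checks that $\lambda\in[0,1]$ and $s+1\le m$ are all sound.
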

Corollary~\ref{cor:bridging_s} is needed when the choices of $s$ leave a gap in the achievable download cost. For example, with $\alpha = 2$, Theorem~\ref{thm:randomized_DNA_PIR_single} gives download costs $E \in [2,3]$ only. Lemma~\ref{lem:construction_multiple_servers_based_on_single_server} with $s=1$ then reaches $E_{\cS'}\in[2,3]$, and with $s=2$ reaches $E_{\cS'}\in[4,6]$, leaving the interval $(3,4)$ uncovered. Corollary~\ref{cor:bridging_s} covers this gap.
\section{Numerical Comparison}\label{sec:numerical_comparison}
In this section, we present numerical comparisons between the achievable schemes and the lower bounds. For a fixed $s$, the $s$-subset schemes in Corollary~\ref{cor:construction_multiple_servers_based_on_single_server} have download cost between $s\alpha$ and $s\alpha H_\alpha$,  where the distributions $\{p_i^r\}$ are chosen numerically to minimize the privacy leakage. Schemes with different values of $s$ can be mixed by Lemma~\ref{lem:mixing_schemes}.

For $\alpha = 2$, the lower bound in Theorem~\ref{thm:lower_bound_minimization} is tight, and the schemes in Construction~\ref{cons:optimal_scheme_alpha_2} attain it. Hence, the optimal tradeoff is known, but computing it requires solving the optimization problem in~\eqref{eq:coverage_minimization} numerically. Fig.~\ref{fig:comparison_alpha_2} compares it, for $m = 3$, with the two results that avoid this optimization: the lower bound in Corollary~\ref{cor:lower_bound_3} and the $s$-subset schemes. The lower bound in Corollary~\ref{cor:lower_bound_3} is not tight, whereas the mixed $s$-subset schemes are close to the optimal tradeoff.
\begin{figure}[t]
    \centering
    \includegraphics[width=\columnwidth]{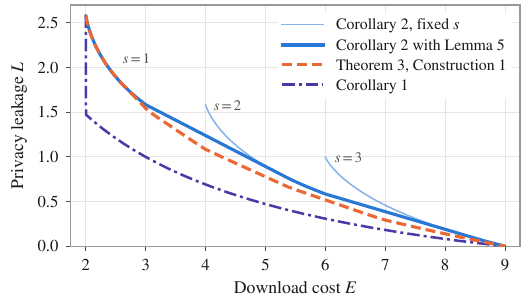}
    \caption{Privacy leakage versus download cost for $\alpha = 2$ and $m = 3$.}
    \label{fig:comparison_alpha_2}
\end{figure}

For $\alpha > 2$, the tightness of the lower bound in Theorem~\ref{thm:lower_bound_minimization} remains unknown, and Construction~\ref{cons:optimal_scheme_alpha_2} is not applicable. Fig.~\ref{fig:comparison_alpha_5} compares, for $\alpha = 5$ and $m = 4$, the $s$-subset schemes with the lower bounds in Theorem~\ref{thm:lower_bound_minimization} and Corollary~\ref{cor:lower_bound_3}. The lower bound in Theorem~\ref{thm:lower_bound_minimization} is tighter than the one in Corollary~\ref{cor:lower_bound_3}, and the mixed $s$-subset schemes are close to it. Since the optimal tradeoff lies between these schemes and the lower bound in Theorem~\ref{thm:lower_bound_minimization}, it is determined up to a small gap for these parameters.
\begin{figure}[t]
    \centering
    \includegraphics[width=\columnwidth]{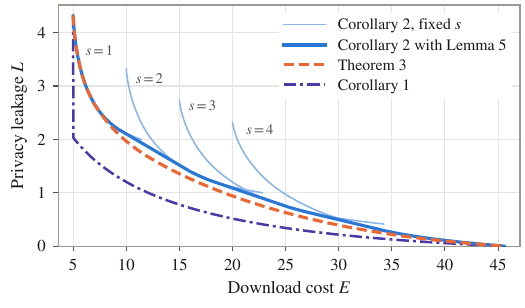}
    \caption{Privacy leakage versus download cost for $\alpha = 5$ and $m = 4$.}
    \label{fig:comparison_alpha_5}
\end{figure}



\section{Conclusion and Extensions}\label{sec:conclusion}
We studied the tradeoff between the download cost and the privacy leakage of SRS DNA PIR schemes with multiple colluding random sampling servers of equal size. We derived lower bounds on this tradeoff, one of which is tight when each server stores two files, and constructed schemes for servers of any size that are close to the lower bound in our numerical examples.

The optimal tradeoff for servers that store more than two files remains open. Other directions are partial collusion, where only some subsets of servers share their observations, and servers of unequal sizes or with overlapping contents.


\section*{AI Statement}
The authors used Claude (Anthropic) and ChatGPT (OpenAI) in the preparation of this paper. In Section~\ref{sec:numerical_comparison}, Claude helped to write the code for the numerical evaluation and to produce Figs.~\ref{fig:comparison_alpha_2} and~\ref{fig:comparison_alpha_5}. In the other sections, both tools were used to revise the wording of text written by the authors. The authors take full responsibility for the content of the paper, including all results, constructions, and proofs.

\bibliographystyle{IEEEtran}
\bibliography{ref.bib} 

\end{document}